\documentclass{article}
\usepackage[numbers]{natbib}

\setlength{\parindent}{0pt}

\usepackage{graphicx} 
\usepackage{amsmath}
\usepackage{amssymb}
\usepackage{amsthm}
\usepackage{enumitem}
\usepackage{authblk}

\newtheorem{theorem}{Theorem}
\newtheorem{lemma}{Lemma}

\title{Charged black holes surrounded by shells of charged particles}
\author{D.~Fajman\footnote{Faculty of Physics, University of Vienna, Austria}, H.~Seer}

\date{\today}

\begin{document}

\maketitle

\begin{abstract}
We construct new static solutions to the Einstein-Vlasov-Maxwell system. These solutions model Reissner-Nordstr\"om-type black holes with surrounding shells of charged matter.
\end{abstract}

\section{Introduction}
The Einstein-Vlasov-Maxwell system models an ensemble of collisionless, charged, self-gravitating particles in general relativity. The construction of static solutions of that system serves as a first step to understand and classify potentially stable configurations 
composed of such particles. Many classes of solutions have been constructed in recent years with various features modelling objects that range from galaxy clusters to photon shells \cite{RR93, R94,AFT17}, even stationary, non-static models \cite{AKR11,AKR14}. In the particular case of charged particles and the presence of electromagnetic fields the corresponding solutions to the Einstein-Vlasov-Maxwell system  
model for instance galactic nebulae, which consist of ionized gas. 

Static solutions with charged particles were constructed first by Thaller and later also generalized to the stationary case \cite{thaller,T20}. These solutions have regular centres. It is however expected that there exist nebula clusters, which surround black holes \cite{SUB-12}. In this paper we construct rigorous examples of static solutions to the Einstein-Vlasov-Maxwell system with Reissner-Nordstr\"om black holes at the center as first potential models for this configuration.\\
Other static solutions with black holes were constructed for the case of uncharged
particles in \cite{R94,AFT15,A21, J21}.
The standard approach to construct static solutions is based on a spherically symmetric ansatz for the metric and a compatible ansatz for the distribution function. In the particular case here we use a perturbative approach based on static solutions for the uncharged system \cite{R94} with a black hole at the center. A major difference to the uncharged case is that the key equation to determine the location of the inner radius of the shell takes the form of a third order polynomial equation in the presence of charges, which requires a different analysis to the uncharged case. 
\subsubsection*{Acknowledgements}
This research was funded in whole or in part by the Austrian Science Fund (FWF) [\emph{Matter dominated Cosmology} 10.55776/PAT7614324].


\section{Setup}
\label{chapter setup}

We consider the static, spherical symmetric Einstein-Vlasov-Maxwell system (EVMS) with vanishing cosmological constant $\Lambda = 0$. This follows the standard setup as for instance in \cite{thaller}. We use similar notations and introduce the approach for the sake of completeness. It is based on the Ansatz

\begin{equation}
    ds^2 = - e^{2\mu (r)} dt^2 + e^{2\lambda (r)} dr^2 + r^2 \left( d \Theta^2 + \sin^2{\Theta} d \varphi^2 \right)
\end{equation}

for the metric, where $\mu(r)$, $\lambda(r) \in \mathbb{R}$. Furthermore, we consider all particles to be of the same mass and charge (given by $q_0$) and their distribution function $f$ to be a function only of the radius $r$, the radial momentum $\omega$ and the angular momentum $L$. We can write the relevant matter quantities, i.e. the energy density $\rho(r)$, the radial pressure $p(r)$, the transversal pressure $p_T(r)$ and the charge density $\rho_q(r)$ in terms of integrals over the particle distribution function:

\begin{align}
    \rho (r) &= \frac{\pi}{r^2} \int_0^\infty \int_{-\infty}^\infty f (r, \omega, L) \sqrt{1 + \omega^2 + \frac{L}{r^2}} d \omega dL \label{rho}\\
    p(r) &= \frac{\pi}{r^2} \int_0^\infty \int_{-\infty}^\infty f(r, \omega, L ) \frac{\omega^2}{\sqrt{1+\omega^2+\frac{L}{r^2}}} d \omega dL \label{p pressure}\\
    p_T(r) &= \frac{\pi}{2r^4} \int_0^\infty \int_{-\infty}^\infty f(r, \omega, L ) \frac{L}{\sqrt{1+\omega^2+\frac{L}{r^2}}} d \omega dL \label{p trans momentum}\\
    \rho_q(r) &= q_0 e^{\lambda (r)} \frac{\pi}{r^2} \int_0^\infty \int_{-\infty}^\infty f(r, \omega, L) d \omega dL \label{rho_q}
\end{align}

With our symmetry considerations the Vlasov equation looks like this:

\begin{equation} \label{Vlasov}
    \omega \frac{\partial f}{\partial r} + \left( \frac{L}{r^3} + q_0 \frac{q(r)}{r^2} e^{\mu(r)+\lambda(r)} - \mu'(r) \left(1+w^2+ \frac{L}{r^2} \right) \right) \frac{\partial f}{\partial \omega} = 0
\end{equation}

where $q(r)$ is the charge contained in a sphere of radius $r$ around the center, determined by the only non-trivial Maxwell equation of the static, spherical symmetric system

\begin{equation}
    q'(r) = 4 \pi r^2 \rho_q(r) \label{Maxwell}
\end{equation}

There are two independent, conserved quantities along the characteristics of the Vlasov equation, namely $L$ and the energy

\begin{equation}
    E = e^{\mu(r)} \sqrt{1 + \omega^2 + \frac{L}{r^2}} - \Tilde{I}_q(r)
\end{equation}

where

\begin{equation} \label{I_Lambda tilde}
    \Tilde{I}_q(r) := q_0 \int_{r_v}^r \frac{q_\Lambda(s)}{s^2} e^{\mu_\Lambda(s)+\lambda_\Lambda(s)} ds
\end{equation}

can be seen as an effective electric energy and $r_v$ is a radius in the vacuum region outside of the black hole horizon. Any Ansatz $f(r, \omega, L) = \Phi(E,L)$ solves the Vlasov equation. For convenience we define

\begin{equation}
    \epsilon := \sqrt{1 + \omega^2 + \frac{L}{r^2}}
\end{equation}.

and use in the following the polytropic Ansatz

\begin{equation}
    f(r, \omega, L) = \Phi (E,L) = \alpha \left[ 1- \frac{E}{E_0}\right]^k_+ \left[ L-L_0 \right]^l_+ = \alpha \left[ 1- \frac{\epsilon e^\mu - \Tilde{I}_q}{E_0}\right]^k_+ \left[ L-L_0 \right]^l_+ \label{Ansatz}
\end{equation}

where $\alpha, k \geq 0$,\: $l> - \frac{1}{2}$,\: $k< 3l + \frac{7}{2}$,\: $E_0 > 0$ and $L_0 \geq 0$ and

\begin{equation*}
    [x]_+ = \left\{ \begin{array}{cc}
         x\text{\: , for }x \geq 0&  \\
         0\text{\: , for }x < 0&
         \end{array} \right.
\end{equation*}

As argued in \cite{thaller}, we can set $E_0=1$ by rescaling the time coordinate. With this Ansatz in \eqref{Ansatz} we can write the matter quantities \eqref{rho}, \eqref{p pressure} and \eqref{rho_q} as

\begin{align}
    \rho(r) &= g(r,\mu(r), \Tilde{I}_q(r)) \label{rho g}\\
    p(r) &= h(r,\mu(r), \Tilde{I}_q(r)) \label{p h}\\
    \rho_q(r) &= q_0 e^{\lambda(r)} k(r, \mu(r), \Tilde{I}_q(r)) \label{rho_q k}
\end{align}

with functions

\begin{align}
    g(r,\mu, \Tilde{I}_q) &= \alpha c_l r^{2l} \int_{\sqrt{1 + \frac{L_0}{r^2}}}^{e^{-\mu}(E_0 + \Tilde{I}_q)} \left( 1 - \frac{\epsilon e^\mu - \Tilde{I}_q}{E_0} \right)^k \epsilon^2 \left( \epsilon^2 - \left(1 + \frac{L_0}{r^2} \right) \right)^{l+\frac{1}{2}} d\epsilon \label{g}\\
    h(r,\mu, \Tilde{I}_q) &= \frac{\alpha c_l}{2l+3} r^{2l} \int_{\sqrt{1 + \frac{L_0}{r^2}}}^{e^{-\mu}(E_0+\Tilde{I}_q)} \left( 1 - \frac{\epsilon e^\mu - \Tilde{I}_q}{E_0} \right)^k \left( \epsilon^2 - \left(1 + \frac{L_0}{r^2} \right) \right)^{l+\frac{3}{2}} d\epsilon \label{h}\\
    k(r,\mu, \Tilde{I}_q) &= \alpha c_l r^{2l} \int_{\sqrt{1 + \frac{L_0}{r^2}}}^{e^{-\mu}(E_0+\Tilde{I}_q)} \left( 1 - \frac{\epsilon e^\mu - \Tilde{I}_q}{E_0} \right)^k \epsilon \left( \epsilon^2 - \left(1 + \frac{L_0}{r^2} \right) \right)^{l+\frac{1}{2}} d\epsilon \label{k}
\end{align}

with

\begin{equation}
    c_l = 2\pi \int_0^1 \frac{s^l}{\sqrt{1-s}} ds
\end{equation}.

Here we understand $g=h=k=0$ if in the integrals in \eqref{g}-\eqref{k} the upper integral limits are smaller or equal the lower integral limits, that is, for

\begin{equation} \label{vac cond}
    \frac{\sqrt{1 + \frac{L_0}{r^2}} e^\mu - \Tilde{I}_q}{E_0} \geq 1
\end{equation}

We can express this condition also as

\begin{equation} \label{gamma}
    0 \geq \ln(E_0+ \Tilde{I}_q) - \frac{1}{2} \ln\left( 1 + \frac{L_0}{r^2} \right) -\mu =: \gamma(r)
\end{equation}

where we defined the new quantity $\gamma(r)$. If on the other hand, the upper integral limits are greater than the lower ones, the integrals in \eqref{g}-\eqref{k} and therefore also the matter quantities $\rho(r)$, $p(r)$ and $\rho_q(r)$ will be positive. This means that we have vacuum, in the sense that all the matter quantities are zero, at a radius $r$ if and only if $\gamma(r) \leq 0$. We note further some properties of the three functions $g$, $h$ and $k$ in \eqref{g}-\eqref{k}. At first, since for positive $g$ and $h$ we have in their definition in \eqref{g} and \eqref{h} that $\epsilon \geq \sqrt{1+ \frac{L_0}{r^2}} \geq 1$, we can conclude that

\begin{equation}
    h \leq \frac{g}{2l+3} \label{gh ineq}
\end{equation}

Furthermore, we state the following lemma

\begin{lemma}The functions $g(r,\mu,\Tilde{I}_q)$, $h(r,\mu,\Tilde{I}_q)$ and $k(r,\mu, \Tilde{I}_q)$ defined in \eqref{g}-\eqref{k} have the following properties:

\begin{enumerate}[label=\roman*.] \label{properties g,h,k}
    \item The functions are continuously differentiable
    \item The functions and all their partial derivatives are increasing in the first and third argument
    \item The functions and their partial derivatives with respect to the first and third argument are non-increasing in the second argument
    \item Their partial derivatives with respect to the second argument are non-decreasing in the second argument.
\end{enumerate}

\end{lemma}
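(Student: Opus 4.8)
The plan is to obtain every assertion by differentiating the integrals \eqref{g}--\eqref{k} under the integral sign, so the first task is to license this. Setting $a=a(r):=\sqrt{1+L_0/r^2}$ and $\psi=\psi(\epsilon,\mu,\tilde{I}_q):=1-(\epsilon e^\mu-\tilde{I}_q)/E_0$, each of $g,h,k$ can be written as $\alpha c_l\,r^{2l}\int_a^\infty[\psi]_+^{\,k}\,\phi(\epsilon)\,(\epsilon^2-a^2)^{l+s+1/2}\,d\epsilon$ with $\phi(\epsilon)\in\{1,\epsilon,\epsilon^2\}$ and $s\in\{0,1\}$, the $[\,\cdot\,]_+$ encoding the upper limit $b=e^{-\mu}(E_0+\tilde{I}_q)$ at which $\psi=0$. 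The two structural facts that drive the whole proof are that $\psi$ vanishes linearly at $\epsilon=b$ and that $(\epsilon^2-a^2)^{l+s+1/2}$ vanishes at $\epsilon=a$; consequently all boundary terms generated by Leibniz's rule vanish (for $k>0$), and differentiating the integrand merely lowers the power of $\psi$ or of $(\epsilon^2-a^2)$ by one. Integrability of the resulting integrands is preserved exactly by the standing assumptions $l>-1/2$ (endpoint $\epsilon=a$) and $k>0$ (endpoint $\epsilon=b$), and the derivatives are again integrals of jointly continuous functions; this gives the continuous differentiability in (i). The borderline case $k=0$ is treated directly from the explicit dependence on $b$.

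The cleanest way to organise (ii)--(iv) is a sign rule to be established: $\partial_{\tilde{I}_q}$ preserves the sign of a manifestly-signed integral, $\partial_\mu$ reverses it, and $\partial_r$ (the delicate case, below) preserves it; granting this, any iterated derivative of $g,h,k$ carries the sign $(-1)^{\nu}$, where $\nu$ counts the $\mu$-derivatives, and (ii)--(iii) are exactly the instances settled by one reversal. The $\tilde{I}_q$- and single-$\mu$-statements are immediate: since $\partial_{\tilde{I}_q}\psi=1/E_0$, differentiation in $\tilde{I}_q$ sends $\psi^{j}\mapsto (j/E_0)\psi^{j-1}$, multiplying the integrand by a positive constant and keeping it in the cone of manifestly nonnegative integrands, so each $\partial_r^{\,p}\partial_{\tilde{I}_q}^{\,n}g$ is a nonnegative integral that is again nondecreasing in $\tilde{I}_q$ (the $\tilde{I}_q$-content of (ii)), and likewise for $h,k$. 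Since $\partial_\mu\psi=-\epsilon e^\mu/E_0$, one $\mu$-derivative instead sends $\psi^{j}\mapsto-(je^\mu/E_0)\,\epsilon\,\psi^{j-1}$, inserting a single strictly negative factor $-\epsilon e^\mu$; applying $\partial_\mu$ to any of the nonnegative quantities above therefore produces a nonpositive integral, which is (iii), as soon as the $r$-monotonicity of (ii) is in hand.

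Two steps carry the actual difficulty. The first is monotonicity in $r$. Since $r$ occurs simultaneously in the prefactor $r^{2l}$, in the lower limit $a(r)$, and in the factor $(\epsilon^2-a^2)$, I would rescale $\epsilon=a\eta$ to reach $g=\alpha c_l\,\frac{(r^2+L_0)^{l+2}}{r^4}\int_1^{b/a}\psi^{k}\eta^2(\eta^2-1)^{l+1/2}\,d\eta$, and analogously for $h,k$. In the rescaled integral $a(r)$ is decreasing in $r$, so that both $\psi$ and the upper limit $b/a$ increase with $r$ and the integral is therefore increasing; the entire question is thus reduced to balancing this increase against the prefactor $(r^2+L_0)^{l+2}/r^4$, whose own monotonicity is not definite. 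It is precisely here that the admissible range of the exponents --- in particular the sign of $l$ --- must enter, and I expect the sharp bookkeeping of this balance to be delicate.

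The second step, and the one I expect to be the main obstacle, is the convexity statement (iv), $\partial_\mu^2 g\ge 0$: here a term-by-term sign count genuinely fails, because two $\mu$-derivatives produce two integrals of opposite sign and the conclusion rests on a cancellation between them. The plan is to remove $\mu$ from the integrand by the substitution $u=\epsilon e^\mu$, which casts $g$ as $\alpha c_l\,r^{2l}e^{-(2l+4)\mu}H(ae^\mu)$ with $H(m)=\int_m^{E_0+\tilde{I}_q}((E_0+\tilde{I}_q-u)/E_0)^{k}\,u^2(u^2-m^2)^{l+1/2}\,du$. A direct computation, using $dm/d\mu=m$ for $m=ae^\mu$, then reduces $\partial_\mu^2 g\ge0$ to the Euler-type inequality $m^2H''(m)-(4l+7)mH'(m)+(2l+4)^2H(m)\ge0$. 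I would prove this by integrating by parts in $H$ so as to rewrite the left-hand side as a single manifestly nonnegative integral, keeping careful track of the endpoint $u=m$, where the exponent $l+\tfrac12$ controls both integrability and the vanishing of the boundary terms. Combined with the one-step sign reversals above, this convexity upgrades (iii) to (iv) and completes the lemma.
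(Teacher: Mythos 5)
The paper itself does not prove this lemma: it states it and defers entirely to \cite{thaller}, so your attempt has to stand on its own. Your framework is the right one — writing $g,h,k$ as $\alpha c_l r^{2l}\int_a^\infty[\psi]_+^k\phi(\epsilon)(\epsilon^2-a^2)^{l+s+1/2}d\epsilon$, differentiating under the integral sign, and letting $k\ge 0$ and $l>-\tfrac12$ control both the boundary terms and the integrability after each differentiation. This cleanly gives (i), the $\Tilde{I}_q$-monotonicity statements (differentiation in $\Tilde{I}_q$ multiplies the integrand by $k/E_0>0$ and keeps it nonnegative), and the single sign reversal under one $\mu$-derivative. The reduction of (iv) to $m^2H''(m)-(4l+7)mH'(m)+(2l+4)^2H(m)\ge 0$ with $m=ae^\mu$ is algebraically correct.

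However, the two claims you yourself flag as "the actual difficulty" are left as announced strategies, not arguments, and they are precisely the substantive content of (ii) and (iv). For the $r$-monotonicity: when $l\ge 0$ it is in fact immediate in the \emph{original} variables, since $r^{2l}$, the factor $(\epsilon^2-1-L_0/r^2)^{l+1/2}$ and the integration domain are all pointwise non-decreasing in $r$ with a nonnegative integrand — your rescaling $\epsilon=a\eta$ manufactures a prefactor-versus-integral balance that is not there. The real case is $-\tfrac12<l<0$: there the combined weight $r^{2l}(\epsilon^2-1-L_0/r^2)^{l+1/2}=r^{-1}\left((\epsilon^2-1)r^2-L_0\right)^{l+1/2}$ has $r$-derivative proportional to $2l(\epsilon^2-1)r^2+L_0$, which is negative for large $\epsilon$, so pointwise monotonicity fails and you give no mechanism by which the growth of the domain compensates. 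Statement (ii) moreover asserts that the \emph{derivatives} are increasing in $r$ (e.g. $\partial_r^2 g\ge 0$), a second-order claim you do not touch, and the portion of (iii) concerning $\partial_r g$ inherits this gap through your sign rule. For (iv), the inequality $m^2H''-(4l+7)mH'+(2l+4)^2H\ge0$ is exactly the open point: $H\ge0$ and $H'\le0$ handle two terms, but $H''$ cannot be computed by naive differentiation (it produces the non-integrable weight $(u^2-m^2)^{l-3/2}$ for $l<1$), and the integration by parts you "would" perform is the missing proof. Until these two steps are carried out, (ii), (iv) and the dependent parts of (iii) remain unestablished; the honest conclusion is that your proposal is a plausible plan whose hard kernel still has to be supplied (or quoted from \cite{thaller}).
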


Lemma \ref{properties g,h,k} is proven in \cite{thaller} and we will need it later, in the course of the proof of the existence of solutions. In the considered system we get two independent, non-trivial Einstein equations, which read

\begin{align}
    e^{-2 \lambda (r)} \left(2r \lambda'(r) - 1 \right) + 1 &= 8 \pi r^2 \rho (r) + \frac{q^2(r)}{r^2} \label{lambda diff}\\
    e^{-2 \lambda (r)} \left(2r \mu'(r) + 1 \right) - 1 &= 8 \pi r^2 p(r) - \frac{q^2(r)}{r^2} \label{mu diff}
\end{align}

\section{Main Theorem: Charged black holes surrounded by charged shells}
\label{chapter BH Lambda=0}

We state the main theorem of the paper in the following.

\begin{theorem} [Main Theorem]  \label{lemma background}
There exists an $\varepsilon>0$ such that for all $(q_0,Q_0)$ of identical sign with $|q_0|+|Q_0|<\varepsilon$ and for all $L_0>0$, $M_0>0$ such that $M_0^2 > Q_0^2$ and $L_0 > 16 M_0^2$ there exists a static spherically symmetric spacetime with radial variable $r>0$ of the following form.
For $r\in(0,r_{\mathrm{shell}})$, where $r\in(r_{\mathrm{shell}},R_{\mathrm{shell}})$ is strictly bigger than the event horizon of the black hole, $M_0+\sqrt{M_0^2-Q_0^2}$, the spacetime is a Reissner-Nordstr\"om black hole with mass $M_0$ and charge $Q_0$,

\begin{equation}
-(1-\frac{2M_0}{r}+\frac{Q_0^2}{r^2})dt^2+(1-\frac{2M_0}{r}+\frac{Q_0^2}{r^2})^{-1}dr^2+dS_r,
\end{equation}
where $dS_r$ denotes the surface-element of the sphere of radius $r$.
\indent For $r\in(r_{\mathrm{shell}}, R_{\mathrm{shell}})$ the metric takes the form
\begin{equation}
- e^{2\mu_q (r)} dt^2 + e^{2\lambda_q (r)} dr^2 + r^2 \left( d \Theta^2 + \sin^2{\Theta} d \varphi^2 \right),
\end{equation}
where $(\lambda_q,\mu_q)$ is a solution to the reduced Einstein equations \eqref{lambda diff}, \eqref{mu diff} and the matter density $\rho(r)$ and the charge density $\rho_q(r)$ are non-vanishing for $r\in(r_{\mathrm{shell}},R_{\mathrm{shell}})$.\\
For $r\in[R_{\mathrm{shell}},\infty)$ the metric is a Reissner-Nordstr\"om metric of the form 

\begin{equation}
-(1-\frac{2M_{\mathrm{total}}}{r}+\frac{Q_{\mathrm{total}}^2}{r^2})dt^2+(1-\frac{2M_{\mathrm{total}}}{r}+\frac{Q_{\mathrm{total}}^2}{r^2})^{-1}dr^2+dS_r,
\end{equation}
where $M_{\mathrm{total}}=M_0+M_{\mathrm{shell}}$ and $Q_{\mathrm{total}}=Q_0+Q_{\mathrm{shell}}$.

\end{theorem}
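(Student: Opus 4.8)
The plan is to construct the solution in the three radial regions of the statement and glue them, treating $(q_0,Q_0)$ as a small perturbation of the uncharged black-hole-plus-shell solution of \cite{R94}. In the innermost region $r\in(0,r_{\mathrm{shell}})$ I would prescribe the Reissner-Nordström metric with parameters $(M_0,Q_0)$; the hypothesis $M_0^2>Q_0^2$ guarantees a genuine subextremal horizon at $M_0+\sqrt{M_0^2-Q_0^2}$, and since this region is vacuum the matter terms must vanish, i.e.\ $\gamma(r)\le 0$ there by the characterization following \eqref{gamma}. In this vacuum region the enclosed charge equals the constant $Q_0$ and the relation $\mu+\lambda=0$ holds for Reissner-Nordström, so $\tilde{I}_q$ from \eqref{I_Lambda tilde} integrates explicitly to $q_0 Q_0(r_v^{-1}-r^{-1})$, which is of order $q_0 Q_0$ and hence small.

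The crux is locating $r_{\mathrm{shell}}$ as a root of $\gamma(r)=0$. Exponentiating \eqref{gamma} with $E_0=1$ yields $(1+\tilde I_q)^2=(1+L_0/r^2)\,e^{2\mu}$, and inserting the explicit Reissner-Nordström $e^{2\mu}=1-2M_0/r+Q_0^2/r^2$ turns this into an algebraic equation in $r$. In the uncharged limit it reduces to $2M_0 r^2 - L_0 r + 2M_0 L_0 = 0$, whose real positive roots require exactly the stated hypothesis $L_0>16M_0^2$ (its discriminant is $L_0(L_0-16M_0^2)$). Switching on the charge promotes this to a cubic, $(2M_0-2q_0Q_0)r^3-(Q_0^2+L_0)r^2+2M_0 L_0 r - Q_0^2 L_0=0$, driven at leading order by the $Q_0^2/r^2$ term together with the linearized $\tilde I_q$. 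I would analyze this cubic as a perturbation of the quadratic: by continuity of its roots in $(q_0,Q_0)$ the relevant quadratic root persists as a root $r_{\mathrm{shell}}$ that stays strictly above the horizon for $\varepsilon$ small, with $\gamma<0$ below and $\gamma>0$ just above, so that matter genuinely switches on there. This is the step I expect to be the main obstacle, since the jump in polynomial degree from two to three is precisely the \emph{different analysis} flagged in the introduction.

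For the shell region $r\in(r_{\mathrm{shell}},R_{\mathrm{shell}})$ I would solve the coupled first-order system formed by the Einstein equations \eqref{lambda diff}, \eqref{mu diff}, the Maxwell equation \eqref{Maxwell}, and the differential form $\tilde{I}_q'(r)=q_0\, q(r)\, r^{-2} e^{\mu(r)+\lambda(r)}$ of \eqref{I_Lambda tilde}, with the matter terms $\rho,p,\rho_q$ substituted from \eqref{rho g}--\eqref{rho_q k} as functions of $(r,\mu,\tilde I_q)$. Local existence and uniqueness follow from the continuous differentiability in Lemma \ref{properties g,h,k}, with initial data at $r_{\mathrm{shell}}$ fixed by matching the inner Reissner-Nordström values of $\mu,\lambda,q,\tilde I_q$. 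To close the construction I would show the shell has finite extent: as $r$ grows $\gamma(r)$ returns to zero at some $R_{\mathrm{shell}}<\infty$, again by perturbing the uncharged second root and exploiting the monotonicity properties of $g,h,k$, beyond which the matter quantities vanish and vacuum resumes.

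Finally, on $[R_{\mathrm{shell}},\infty)$ vacuum forces a Reissner-Nordström metric; integrating $4\pi r^2\rho$ and $4\pi r^2\rho_q$ across the shell defines $M_{\mathrm{shell}}$ and $Q_{\mathrm{shell}}$, and continuity of $(\mu,\lambda,q)$ at $R_{\mathrm{shell}}$ identifies the exterior parameters as $M_{\mathrm{total}}=M_0+M_{\mathrm{shell}}$ and $Q_{\mathrm{total}}=Q_0+Q_{\mathrm{shell}}$. The existence of a uniform $\varepsilon$ then follows by running the whole construction as a continuous deformation of the \cite{R94} solution in $(q_0,Q_0)$: every threshold used above—the admissible cubic root lying strictly above the horizon, the finiteness of $R_{\mathrm{shell}}$, and the positivity and boundedness of the matter—holds with strict inequalities at $q_0=Q_0=0$ and therefore survives for all sufficiently small charges of identical sign.
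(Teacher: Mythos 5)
Your overall architecture --- perturb the uncharged black-hole-plus-shell solution of \cite{rein}, locate the inner edge of the shell via the vacuum condition in the Reissner--Nordstr\"om region, integrate the reduced Einstein equations through the shell, and glue an exterior Reissner--Nordstr\"om metric with summed mass and charge --- matches the paper's. However, two steps are genuinely gapped. First, your treatment of the inner region is inconsistent as stated: you assert $\gamma(r)\le 0$ on all of $(0,r_{\mathrm{shell}})$ ``because the region is vacuum,'' but the implication runs the other way. For the nontrivial Ansatz \eqref{Ansatz} vacuum holds \emph{iff} $\gamma\le 0$, and near the horizon $e^{\mu}\to 0$ forces $\gamma\to+\infty$, so the nontrivial Ansatz would switch matter on just outside the horizon and destroy the prescribed inner Reissner--Nordstr\"om metric. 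The actual construction takes $f\equiv 0$ inside and glues it to the nontrivial Ansatz at a radius lying in a genuine vacuum window $(r_{-q},r_{+q})$ strictly above the horizon where both Ans\"atze agree; producing that window is the content of the paper's first lemma. Concretely one must show that $a_{Q_0}(r)=1$ has three real roots (positivity of the discriminant, proved from $M_0^2>Q_0^2$ and $L_0>16M_0^2$ alone), that the smallest lies below the inner horizon and is irrelevant, that the other two lie above $r_{BQ_0}$ and bracket the uncharged roots $r_{\pm}$, and that $\tilde I_{qv}\le q_0Q_0/r_v$ is small enough that $a(\hat r)-\tilde I_{qv}(\hat r)>1$ at the maximum $\hat r$ of $a$. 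Your root-continuity argument could in principle replace the explicit trigonometric root formulas for small charges, but note that folding $\tilde I_q$ into the algebraic equation yields a quartic (the linear term in $\tilde I_q$ contributes $2q_0Q_0 r^4/r_v$ after clearing denominators), not the cubic you wrote; the paper avoids this by solving $a_{Q_0}=1$ exactly and treating $\tilde I_{qv}$ as a separate, uniformly small correction.

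Second, ``the shell has finite extent\dots by perturbing the uncharged second root'' is not a proof: the outer boundary $R_0$ of the uncharged shell is determined dynamically by integrating through the matter region, not algebraically, and transferring it to the charged problem requires uniform a priori control of the solution on all of $[r_0,R_0+\Delta R]$. In particular you must rule out that the denominator of the reduced equation \eqref{Einstein eq background} degenerates before the matter support closes; the paper does this via the generalized Buchdahl inequality \eqref{Buchdahl BH} (giving $d_{M_0}\ge \tfrac19$) combined with a Gr\"onwall estimate on $|g_q-g|+|h_q-h|$, which in turn bounds $|\mu_q-\mu|$ and $|\gamma_q-\gamma|$ by quantities vanishing with $|q_0|+|Q_0|$, so that $\gamma_q$ is forced negative by $R_0+\Delta R$. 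Bare ``continuity in the parameters'' does not supply these uniform bounds on an interval where the solution could a priori terminate; the continuation criterion, the Buchdahl bound and the Gr\"onwall loop are the bulk of the remaining work.
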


\subsection{Proof of the main theorem}
\subsubsection{Strategy of the proof}
 As proven in \cite{rein} there exist global solutions $(\mu(r), \lambda(r))$ to the uncharged EVS with $\Lambda =0$ with a black hole singularity at the center and the matter quantities are supported on a shell $\{r_{+} < r <  R_0 \}$. We make use of this result to show that, introducing sufficiently weak charges to the system, so staying sufficiently close to the uncharged solution, there still exist global solutions with non-trivial matter quantities of finite spatial support. In the following let for each quantity $x$ in the uncharged case denote $x_q$ the corresponding quantity for the charged solution to the EVMS. \\

We divide the proof of the theorem into a few lemmas to simplify its organization. In the first lemma we want to show that under the conditions of Theorem \ref{lemma background} there exists a region $[r_{-q}, r_{+q}]$ outside the black hole horizon where a nontrivial Ansatz \eqref{Ansatz} where $E_0=1$ (assumed from now on, without loss of generality) gives a vacuum solution, more precisely a part of the outer region of the Reissner-Nordstr\"om solution with parameters $(M_0,Q_0)$ of the black hole.

\begin{lemma}
Let $\varepsilon>0$ be sufficiently small. Then for $(q_0,Q_0)$ of identical sign with $|q_0|+|Q_0|<\varepsilon$ and for all $L_0>0$, $M_0>0$ such that $M_0^2 > Q_0^2$ and $L_0 > 16 M_0^2$ there exists a non-trivial region $[r_{-q}, r_{+q}]$ outside the black hole event horizon where a nontrivial Ansatz \eqref{Ansatz} yields a vacuum solution
  \begin{align}
        e^{2 \mu_q(r)} &= 1- \frac{2 M_0}{r} + \frac{Q_0^2}{r^2} \label{RN mu}\\
        e^{2 \lambda_q(r)} &= \left( 1- \frac{2 M_0}{r} + \frac{Q_0^2}{r^2} \right) ^{-1}.
    \end{align}

\end{lemma}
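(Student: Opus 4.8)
The plan is to show that for sufficiently weak charges, the vacuum condition \eqref{vac cond} (equivalently $\gamma(r) \le 0$) holds on a nontrivial interval outside the horizon, where the metric coincides with the exterior Reissner–Nordström solution. I would proceed by perturbation off the known uncharged solution of \cite{rein}. The key structural fact is that in a region where all matter quantities vanish, the Einstein equations \eqref{lambda diff}, \eqref{mu diff} reduce (with $\rho=p=0$ and $q(r)\equiv Q_0$ constant, since $\rho_q=0$ there) to exactly the equations solved by the Reissner–Nordström metric \eqref{RN mu}. So the real content is not verifying the metric form but establishing that a genuine vacuum interval $[r_{-q},r_{+q}]$ strictly outside the horizon $M_0+\sqrt{M_0^2-Q_0^2}$ exists and is nondegenerate.

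\textbf{Step 1: The uncharged baseline.} First I would recall from \cite{rein} that in the uncharged case ($q_0=0$, $Q_0=0$) there is a vacuum gap: the matter is supported on a shell $\{r_+ < r < R_0\}$, so immediately outside the horizon $r_+=2M_0$ there is an interval on which $\gamma(r)\le 0$ with the Schwarzschild metric. Concretely, the function $\gamma(r)=\ln(E_0+\tilde I_q(r)) - \tfrac12\ln(1+L_0/r^2) - \mu(r)$ should satisfy $\gamma < 0$ strictly at some reference radius $r_v$ just outside the horizon. The conditions $M_0^2 > Q_0^2$ (so the horizon is nondegenerate and $r_+<2M_0$ stays bounded away from the would-be charged horizon) and $L_0 > 16M_0^2$ are precisely the quantitative hypotheses that guarantee the angular-momentum term $\tfrac12\ln(1+L_0/r^2)$ dominates enough to force $\gamma<0$ on an open set; I would make this margin explicit and uniform.

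\textbf{Step 2: Continuity in the charge parameters.} The heart of the argument is that $\tilde I_q$, $\mu_q$, $\lambda_q$, and hence $\gamma_q$ depend continuously on $(q_0,Q_0)$, with the charged quantities reducing to the uncharged ones as $(q_0,Q_0)\to(0,0)$. I would set up $\tilde I_q(r)$ from \eqref{I_Lambda tilde}: on the vacuum region it is an integral of $q_0 q(r)/r^2\, e^{\mu+\lambda}$, which is $O(|q_0|)$ small, so $\tilde I_q$ is a controlled perturbation. Likewise the Reissner–Nordström factor $1-2M_0/r+Q_0^2/r^2$ differs from the Schwarzschild factor $1-2M_0/r$ by the $O(Q_0^2)$ term $Q_0^2/r^2$, uniformly small on the relevant compact radius range. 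Since $\gamma$ depends continuously (indeed smoothly) on these ingredients and the uncharged $\gamma$ is \emph{strictly} negative on a closed subinterval, for $|q_0|+|Q_0|<\varepsilon$ small enough the charged $\gamma_q$ remains strictly negative there. By the characterization following \eqref{gamma}, strict negativity of $\gamma_q$ means all matter quantities vanish, i.e. genuine vacuum, and \eqref{vac cond} holds. This yields the nonempty interval $[r_{-q},r_{+q}]$.

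\textbf{Step 3: Identifying the metric.} On this vacuum interval, $\rho=p=\rho_q=0$, so $q(r)=Q_0$ is constant, and \eqref{lambda diff}–\eqref{mu diff} become the vacuum Reissner–Nordström equations; integrating \eqref{lambda diff} with the mass-aspect/charge terms gives $e^{2\lambda_q}=(1-2M_0/r+Q_0^2/r^2)^{-1}$, and matching with $\mu_q+\lambda_q$ constant in vacuum (from adding the two Einstein equations) fixes $e^{2\mu_q}$ to the same factor after using the boundary normalization. The main obstacle I anticipate is \textbf{Step 2}, specifically ensuring the perturbation is uniform: the vacuum interval must stay strictly outside the charged horizon $M_0+\sqrt{M_0^2-Q_0^2}$, which itself moves with $Q_0$, so I must verify that the lower endpoint $r_{-q}$ does not collapse onto the shifting horizon. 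The hypotheses $M_0^2>Q_0^2$ and the smallness of $\varepsilon$ are exactly what keep the horizon bounded away from the vacuum gap inherited from \cite{rein}, and I would make this separation quantitative to close the argument.
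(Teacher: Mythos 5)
Your argument is correct in substance and proves the lemma, but it takes a genuinely different route from the paper. The paper does not treat the black-hole charge perturbatively at this stage: it studies the equation $a_{Q_0}(r)=1$ with $a_{Q_0}(r)=\sqrt{1-2M_0/r+Q_0^2/r^2}\,\sqrt{1+L_0/r^2}$ exactly, as a cubic in $r$. It computes the discriminant and shows it is positive under $M_0^2>Q_0^2$ and $L_0>16M_0^2$ (Appendix A1), writes the three real roots in trigonometric form, and locates them: the smallest lies below the inner horizon radius $M_0-\sqrt{M_0^2-Q_0^2}$, the middle one $r_{-Q_0}$ lies strictly outside the event horizon (Appendix A2), and the largest satisfies $r_{+Q_0}\ge r_-$ (Appendix A3); combined with $a_{Q_0}\ge a$ pointwise this gives the configuration $r_{BQ_0}<r_{-Q_0}<r_-<r_+<r_{+Q_0}$ for \emph{all} admissible $Q_0$, and smallness of the charges is invoked only to absorb the term $\tilde{I}_{qv}\le q_0Q_0/r_v$ at the maximum $\hat r$ of $a$. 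Your proof instead treats both $Q_0^2/r^2$ and $\tilde{I}_{qv}$ as small perturbations of the uncharged picture and uses only that $a>1$ strictly on a compact subinterval of $(r_-,r_+)$; this is shorter, suffices for the statement, and still yields the overlap $\max\{r_-,r_{-q}\}<\min\{r_+,r_{+q}\}$ needed in the next lemma, but it gives no non-perturbative control of the window $[r_{-q},r_{+q}]$ in terms of $Q_0$. Two refinements you could make: the metric perturbation has a favourable sign, since $a_{Q_0}\ge a$ wherever both are defined, so only the $\tilde{I}_{qv}$ term genuinely requires smallness; and the obstacle you flag about the vacuum interval colliding with the shifted horizon is vacuous, because $M_0+\sqrt{M_0^2-Q_0^2}\le 2M_0=r_B<r_-\le\hat r$ automatically. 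Finally, beware a notational slip: in the paper $r_+$ is not the horizon but the outer radius where $a=1$; the nontrivial-Ansatz vacuum window $[r_-,r_+]$ is separated from the horizon $2M_0$, with the trivial Ansatz $f\equiv 0$ used on the gap in between.
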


\begin{proof}

In a vacuum region outside of the black hole horizon the effective electromagnetic energy $\Tilde{I}_{qv}$ reads

    \begin{equation} \label{I_q tilde in vac}
        \Tilde{I}_{qv}(r) = q_0 \int_{r_v}^r \frac{q(s)}{s^2}ds = q_0 q(r_v) \left(\frac{1}{r_v}- \frac{1}{r} \right) = q_0 Q_0 \left(\frac{1}{r_v}- \frac{1}{r} \right)
    \end{equation}  
    such that the vacuum condition \eqref{vac cond} then reads

    \begin{equation} \label{vac cond BH}
        \sqrt{1 + \frac{L_0}{r^2}} \sqrt{1- \frac{2M_0}{r} + \frac{Q_0^2}{r^2}} - q_0 Q_0 \left( \frac{1}{r_v} - \frac{1}{r}\right) \geq 1
    \end{equation}
    
    We note that $\Tilde{I}_{qv}(r)$ is bigger than zero, because $q_0$ and $Q_0$ have the same sign and $r$ is bigger or equal $r_v$ and it is trivially bounded from above by

    \begin{equation} \label{Tilde(I)_qv estimate}
        \Tilde{I}_{qv}(r) \leq \frac{q_0 Q_0}{r_v},
    \end{equation}
    
    which will be used later. Now we define the function
    
    \begin{equation}
        a_{Q_0}(r):= \sqrt{1- \frac{2 M_0}{r} + \frac{Q_0^2}{r^2}} \cdot \sqrt{1+\frac{L_0}{r^2}},
    \end{equation}
which appears in the vacuum condition above and the corresponding function for vanishing black hole charge by
    
    \begin{equation}
        a(r):= \sqrt{1- \frac{2 M_0}{r}} \cdot \sqrt{1+\frac{L_0}{r^2}}.
    \end{equation}


    Basically, we want to find the following configuration which is exemplarily depicted in figure \ref{radii config Lambda=0}\par
    \begin{figure}[h!]
        \centering
        \includegraphics[width=\textwidth]{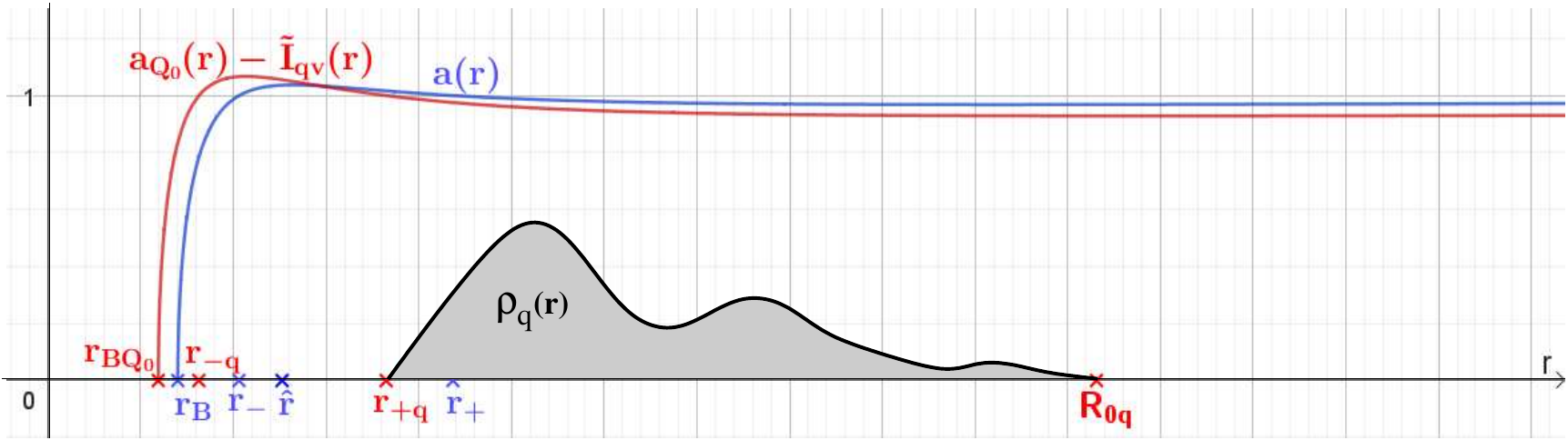} 
        \caption{Sketch of an exemplary configuration of a weakly charged black hole surrounded by a weakly charged matter shell for $\Lambda = 0$}
        \label{radii config Lambda=0} 
    \end{figure}
      On the x-axis the radius is plotted, while the y-axis is dimensionless. We see in blue the function $a(r)$ for the background solution and in red the function $a_{Q_0}(r) - \Tilde{I}_{qv}(r)$. In the region where the respective function is bigger than one, there is a vacuum region for the corresponding solution. $r_{BQ_0}$ and $r_B$ are the black hole horizons for the respective solution, $r_{-}$ and $r_{-q}$ the first radius outside of the black hole, where the function takes the value one, $r_{+}$ and $r_{+q}$ are the second ones outside of the black hole, $\hat{r}$ is the radius, where $a(r)$ attains its maximum and $R_{0q}$ is the boundary of the support of the matter quantities of the charged solution. In the figure $\rho_q(r)$ is depicted to exemplify the matter quantities, as well corresponding to the charged solution.

     To show this configuration we define further the radicands of the functions above by

    \begin{equation}
        v(r):= 1- \frac{2 M_0}{r}
    \end{equation}
   and  
    \begin{equation} \label{v_Q0}
        v_{Q_0}(r):= 1- \frac{2 M_0}{r} + \frac{Q_0^2}{r^2}
    \end{equation}

    and notice that the zeros of $v(r)$ and $v_{Q_0}(r)$ coincide with the zeros of $a(r)$ and $a_{Q_0}(r)$ respectively and since $v(r)$ and $v_{Q_0}(r)$ correspond to the first metric components of the Schwarzschild and the Reissner-Nordstr\"om metric respectively, their zeros give rise to the horizons of the corresponding system. \\
    Note that if $v_{Q_0}(r)<0$ then $a_{Q_0}(r)$ is not defined (as a real-valued function).

    For the background solution it is known that the black hole horizon $r_B$ is given by $r_B=2M_0$ and under the condition $L_0 > 16M_0^2$ there exist two radii $r_{\pm}$ where $a(r)=1$ between which $a(r)>1$ and we have $r_B < r_{-} < r_{+}$. (see \cite{fajman}, p. 2676)\\ 
    
     The condition $M_0^2>Q_0^2$ assures that $a_{Q_0}$ has two positive, real zeros, which follows from the explicit solution for second order polynomials. The black hole horizon for the solution to the charged EVMS corresponds to the bigger zero and is given by

    \begin{equation} \label{r_BQ0}
        r_{BQ_0} = M_0 + \sqrt{M_0^2-Q_0^2}.
    \end{equation}

    From there we see immediately that $r_{BQ_0} < r_B$.
    
    We analyze now the location of the values of $r$ where $a_{Q_0}(r)=1$. The equation $a_{Q_0}(r)=1$ for finite $r$ is equivalent to the cubic equation
    
    \begin{equation}
        -2M_0r^3 + \left(Q_0^2 + L_0 \right) r^2 -2M_0L_0r + Q_0^2L_0 = 0
    \end{equation}
    
The sign of its discriminant
    
    \begin{equation}
    \small
        D = \frac{72 (Q_0^2 + L_0) Q_0^2 M_0^2 L_0^2 - 64 M_0^4 L_0^3 - 108 Q_0^4 M_0^2 L_0^2 + 4 (Q_0^2 + L_0)^2 M_0^2 L_0^2 -4 (Q_0^2+L_0)^3 Q_0^2 L_0}{1728 M_0^4}
    \end{equation}
  
determines how many different real solutions the cubic equation has (see \cite{bosch}). As we show in the following the sign is positive. 

 Since we only need to find out the sign of the discriminant, we can break it down to the following expression
    
    \begin{equation} \label{discriminant}
        \Tilde{D} := 4M_0^2L_0 \left(-2Q_0^4 + 5Q_0^2L_0-4M_0^2L_0 \right) +M_0^2L_0^3 - Q_0^2 \left(Q_0^2+L_0 \right)^3
    \end{equation}
    
    by dropping a factor $\frac{4L_0}{1728 M_0^4}$ which has a positive sign. Imposing the conditions $M_0^2>Q_0^2$ and $L_0>16M_0^2$ one can show that the expression in \eqref{discriminant} is strictly positive (cf.~subsection \ref{A1} of the appendix). In consequence, there exist three distinct real solutions to the cubic equation. \\

There are three finite, real values $r_i$ with $a_{Q_0}(r_i)=1$ with $i=1,2,3$, given by the following expressions (see \cite{bosch}) 

    \begin{align}
    \begin{split} \label{cubic sol}
        r_1 &= \frac{Q_0^2+L_0}{6 M_0} + \sqrt{- \frac{4p}{3}} \cos{\left( \frac{1}{3} \arccos{\left(-\frac{q}{2} \sqrt{-\frac{27}{p^3}}\right)} \right)}\\
        r_{2,3} &= \frac{Q_0^2+L_0}{6 M_0} - \sqrt{- \frac{4p}{3}} \cos{\left( \frac{1}{3} \arccos{\left(-\frac{q}{2} \sqrt{-\frac{27}{p^3}}\right)} \pm \frac{\pi}{3} \right)}
    \end{split}
    \end{align}
    
    with 
    
    \begin{equation} \label{p}
        p = L_0 - \frac{\left(Q_0^2+L_0\right)^2}{12M_0^2}
    \end{equation}
    
    and
    
    \begin{equation}
        q = \frac{L_0(Q_0^2+L_0)}{6M_0} - \frac{Q_0^2L_0}{2M_0}- \frac{Q_0^6+3Q_0^4L_0+3Q_0^2L_0^2+L_0^3}{108M_0^2}.
    \end{equation}

    Since

    \begin{equation} \label{a(0)}
        \underset{r \searrow 0}{\lim} a_{Q_0}(r) = \infty
    \end{equation}

    and $a_{Q_0}(r)$ is a continuous function as long as $v_{Q_0}(r)>0$, the smallest of the three solutions \eqref{cubic sol} is smaller than the first zero of $a_{Q_0}(r)$ and therefore also smaller than the event horizon $r_{BQ_0}$, so it will not be relevant for our solution. \\
    
    Hence, we focus on the two larger radii of the set $\{r_1,r_2,r_3\}$. We call the second largest one $r_{-Q_0}$ and the largest one $r_{+Q_0}$.  Elementary analysis shows that of the following three expressions
     \begin{gather}
        \cos{x} \label{cos1}\\
        - \cos{\left(x - \frac{\pi}{3} \right)}\\
        - \cos{\left(x + \frac{\pi}{3} \right)} \label{cos3}
    \end{gather}
    at any $x$ the second lowest expression is always bigger equal $- \frac{1}{2}$. This implies the estimate
    
    \begin{equation}
        r_{-Q_0}\geq \frac{Q_0^2+L_0}{6M_0} - \frac{1}{2} \sqrt{-\frac{4p}{3}}.
    \end{equation}
    
    Plugging in the expression \eqref{p} for $p$ we get
    
    \begin{equation} \label{for apx A2}
        r_{-Q_0}\geq \frac{Q_0^2+L_0-\sqrt{\left(Q_0^2+L_0\right)^2-12L_0M_0^2}}{6M_0}
    \end{equation}
    
    it can be shown (see appendix subsection \ref{A2}) that
    
    \begin{equation} \label{rhs > r0}
        \frac{Q_0^2+L_0-\sqrt{\left(Q_0^2+L_0\right)^2-12L_0M_0^2}}{6M_0} > M_0-\sqrt{M_0^2-Q_0^2}
    \end{equation}
    
    so in combination we have
    
    \begin{equation}
        r_{-Q_0} > M_0 - \sqrt{M_0^2 - Q_0^2}
    \end{equation}

    meaning that $r_{-Q_0}$ is bigger than the first zero of $a_{Q_0}(r)$. Furthermore, since $v_{Q_0}(r)$ is continuous, between its two zeros it is either strictly positive or strictly negative, so we only have to check one point between the two zeros which we choose to be the average of the two zeros $\frac{M_0 + \sqrt{M_0^2-Q_0^2} + M_0 - \sqrt{M_0^2-Q_0^2}}{2} = M_0$ and we obtain

    \begin{equation}
        v(M_0) = 1-\frac{2M_0}{M_0}+ \frac{Q_0^2}{M_0^2} = -1 + \frac{Q_0^2}{M_0^2} < 0 
    \end{equation}

    where we used the condition $M_0^2 > Q_0^2$. So, $v_{Q_0}(r)$ is strictly negative between its two zeros, that is for $M_0-\sqrt{M_0^2-Q_0^2} < r < r_{BQ_0}$. 
    
    Therefore, we can conclude that

    \begin{equation}
        r_{-Q_0} > r_{BQ_0}
    \end{equation}
    
    There is also a lower bound on the largest of the three cosines in \eqref{cos1} - \eqref{cos3} which is given by $+ \frac{1}{2}$. From that, we can derive a lower bound on $r_{+Q_0}$ and find

    \begin{equation}
        r_{+Q_0} \geq r_{-}
    \end{equation}
    
    (see appendix subsection \ref{A3} for the calculation). In the interval $(r_{-},r_{+})$, we have $a(r)>1$. Since $a_{Q_0}(r) \geq a(r)$ at any $r$ (where $a_{Q_0}(r)$ is defined),  $a_{Q_0}(r)>1$ holds in this region. Furthermore, due to continuity of $a_{Q_0}(r)$, we get

    \begin{equation}
        r_{-Q_0} < r_{-} < r_{+} < r_{+Q_0}
    \end{equation}
 This is the desired configuration we intended to show.\\

    In the end we want to find a region $[r_{-q},r_{+q}]$, where we can continue the initial Ansatz $f_q = 0$ by a nontrivial Ansatz \eqref{Ansatz}, i.e. a region where also the nontrivial Ansatz yields vacuum. To do so, we still need to take the second term in condition \eqref{vac cond BH}, i.e. the $\Tilde{I}_{qv}(r)$-term, into account. We remember that inside of a vacuum region $[r_{-q}, r_{+q}]$ we have the estimate \eqref{Tilde(I)_qv estimate} for the electromagnetic energy contribution. With this in mind, we estimate the difference between $a(r)$ and $a(r)-\Tilde{I}_{qv}(r)$ at the unique maximum $\hat{r}$ of $a(r)$ between $r_{-}$ and $r_{+}$ to

    \begin{equation} \label{dist a, a-I_qv}
        a(\hat{r}) - \left(a(\hat{r}) - \Tilde{I}_{qv}(\hat{r}) \right) = \Tilde{I}_{qv}(\hat{r}) <\frac{q_0 Q_0}{r_v}
    \end{equation}
    
    Thus, in this step it is sufficient to choose either the absolute value of $q_0$ or of $Q_0$ small enough, such that $a(\hat{r})-(a(\hat{r})-\Tilde{I}_{qv}(\hat{r})) \leq a(\hat{r}) - 1$, which means that $a(\hat{r}) - \Tilde{I}_{qv}(\hat{r}) > 1$ and guarantees the existence of exactly two ones $\{r_a, r_b\}$ of the function $a(r) - \Tilde{I}_{qv}(r)$ between $r_{-}$ and $r_{+}$. Since $a(r) - \Tilde{I}_{qv}(r) \leq a_{Q_0}(r) - \Tilde{I}_{qv}(r) \leq a_{Q_0}(r)$ for all $r$, $a_{Q_0}(r)-\Tilde{I}_{qv}(r)$ has exactly two ones $r_{-q}$ and $r_{+q}$ in the interval $[r_{-Q_0}, r_{+Q_0}]$ such that
    
    \begin{equation}
        r_{-Q_0} \leq r_{-q} \leq r_a \leq r_b \leq r_{+q} \leq r_{+Q_0}
    \end{equation}
    
    and $a_{Q_0}(r) - \Tilde{I}_{qv}(r) >1$ between $r_{-q}$ and $r_{+q}$. In this region $[r_{-q}, r_{+q}]$ we have

    \begin{equation}
        a_{Q_0}(r) - \Tilde{I}_{qv}(r) = e^{\mu_{q}(r)} \sqrt{1+ \frac{L_0}{r^2}} - \Tilde{I}_{qv}(r) > 1
    \end{equation}

    with $e^{\mu_q(r)} = \sqrt{1-\frac{2M_0}{r} + \frac{Q_0^2}{r^2}}$ which is precisely the condition \eqref{vac cond BH} for having a vacuum.
  \end{proof}

  \begin{lemma}
  For a nontrivial Ansatz \eqref{Ansatz} there exists a unique solution to the system at least up to a radius $R_0 + \Delta R$, where $R_0$ is the boundary of the support of the uncharged background solution and $\Delta R$ is a constant strictly bigger than zero. In addition, the matter quantities of the solution have support bounded by the radius $R_{0q}$ with $R_{0q}\leq R_0+\Delta R$.
  \end{lemma}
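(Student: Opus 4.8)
The plan is to rewrite the reduced field equations together with the Maxwell law and the definition of $\tilde I_q$ as one closed first-order ODE system in the radial variable, and then to treat the charged problem as a regular perturbation of the uncharged system of \cite{rein}, whose solution is global and whose matter is supported on $[r_+,R_0]$. Concretely, I would first solve \eqref{lambda diff} and \eqref{mu diff} algebraically for $\lambda_q'$ and $\mu_q'$, insert $\rho=g(r,\mu_q,\tilde I_q)$ and $p=h(r,\mu_q,\tilde I_q)$ from \eqref{rho g}, \eqref{p h}, and adjoin the Maxwell equation \eqref{Maxwell} in the form $q'=4\pi r^2 q_0 e^{\lambda_q}k(r,\mu_q,\tilde I_q)$ together with $\tilde I_q'=q_0 q r^{-2}e^{\mu_q+\lambda_q}$, the latter obtained by differentiating \eqref{I_Lambda tilde}. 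Promoting $\tilde I_q$ to a dynamical variable in this way turns the otherwise self-referential integral \eqref{I_Lambda tilde} into part of a genuine ODE for $y=(\mu_q,\lambda_q,q,\tilde I_q)$, say $y'=F(r,y;q_0,Q_0)$.

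By the continuous differentiability asserted in Lemma \ref{properties g,h,k}, the right-hand side $F$ is $C^1$ in $y$ and depends smoothly on the charge parameters $(q_0,Q_0)$; in particular it is locally Lipschitz, so Picard-Lindel\"of yields a unique local solution and uniqueness on the maximal interval of existence. I would prescribe initial data at the outer edge $r_{+q}$ of the vacuum region supplied by the previous lemma, where $y$ coincides with the exterior Reissner-Nordstr\"om data of parameters $(M_0,Q_0)$ and $\tilde I_q=\tilde I_{qv}(r_{+q})$. As $|q_0|+|Q_0|\to 0$ the initial point $r_{+q}$ tends to the background value $r_+$ and these data converge to the Schwarzschild data of the uncharged solution, so that at $q_0=Q_0=0$ the system $F$ collapses exactly to the uncharged EVS.

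The continuation step is where the background result is exploited. The solution of \cite{rein} exists on all of $(r_B,\infty)$, is regular, and in particular stays bounded with $e^{2\lambda}$ finite on the compact interval $[r_+,R_0+\Delta R]$ for any fixed $\Delta R>0$. A solution of $y'=F$ can fail to reach $R_0+\Delta R$ only through divergence of a component or degeneration of $e^{2\lambda_q}$ (an incipient horizon). By the standard theorem on continuous dependence of ODE solutions on parameters and on initial data over a fixed compact interval, there is an $\varepsilon>0$ such that for $|q_0|+|Q_0|<\varepsilon$ the charged solution exists on the whole of $[r_{+q},R_0+\Delta R]$ and stays uniformly close to the background. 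The sign and monotonicity properties of $g,h,k$ in Lemma \ref{properties g,h,k} provide the supplementary a priori control ($\rho,p,\rho_q\ge 0$ and $q-Q_0,\tilde I_q=O(\varepsilon)$) confirming that the charge terms $q^2/r^2$ in \eqref{lambda diff}, \eqref{mu diff} and $\tilde I_q$ in the matter functions remain genuinely perturbative and cannot drive the degeneration.

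Finally I would locate the matter support. By \eqref{gamma} the matter vanishes exactly where $\gamma_q\le 0$, and the previous lemma gives $\gamma_q>0$ just beyond $r_{+q}$. For the background, $\gamma$ decreases through zero at $R_0$ and is strictly negative on $(R_0,R_0+\Delta R]$; assuming the crossing at $R_0$ is transversal, the uniform closeness transports it, so that $\gamma_q$ has a first zero $R_{0q}$ beyond $r_{+q}$ with $|R_{0q}-R_0|$ as small as desired for $\varepsilon$ small, hence $R_{0q}\le R_0+\Delta R$. For $r\ge R_{0q}$ the solution is vacuum and is continued uniquely by the exterior Reissner-Nordstr\"om metric, on which $\gamma_q<0$ so that no re-ignition occurs. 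The main obstacle is the uniform-in-$\varepsilon$ continuation, i.e.\ ruling out the metric degeneration before $R_0+\Delta R$; this rests on the regular-perturbation structure of $F$ and compactness of the target interval, the self-referential nature of $\tilde I_q$ having been removed by treating it as a state variable. The accompanying delicate point is the transversal vanishing of $\gamma$ at $R_0$, without which a perturbation could in principle displace or re-open the support and spoil the bound $R_{0q}\le R_0+\Delta R$.
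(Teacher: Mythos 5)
Your route is viable but genuinely different from the paper's. The paper does not set up a soft continuous-dependence argument: it integrates \eqref{lambda diff} explicitly, reduces everything to the scalar equation \eqref{Einstein eq background} for $\mu_q'$, invokes the continuation criterion of \cite{thaller} (the solution persists as long as the denominator $d_{M_0Q_0}$ stays positive and $\gamma_q$ is controlled), and then runs a fully quantitative Gr\"onwall estimate on $\lvert g_q-g\rvert+\lvert h_q-h\rvert$ to obtain bounds of the form $(q_0Q_0+q_0^2)C(r)+C_{Q_0}(r)$ on $\lvert\mu-\mu_q\rvert$, $\lvert\gamma_q-\gamma\rvert$ and $\lvert d_{M_0}-d_{M_0Q_0}\rvert$. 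The anchor that rules out degeneration of the denominator is not compactness but the generalized Buchdahl inequality \eqref{Buchdahl BH} for the background (\cite{fajman}, Lemma 5.1, extended to the EVMS), which gives the explicit bound $d_{M_0}\geq\tfrac19$ and hence $d_{M_0Q_0}\geq\Delta d_0>0$ once the charges are small; your compactness argument supplies the same lower bound non-effectively. What your approach buys is conceptual economy (Picard--Lindel\"of plus parameter dependence, no Gr\"onwall bookkeeping, and a clean treatment of $\Tilde{I}_q$ as a state variable); what the paper's buys is an effective smallness condition on $\lvert q_0\rvert+\lvert Q_0\rvert$ and a comparison that is pointwise in $r$, which is then reused verbatim to bound the support via $\lvert\gamma_q-\gamma\rvert\leq\lvert\gamma(R_0+\Delta R)\rvert$.

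Two caveats. First, you compare a charged solution launched at $r_{+q}$ with a background launched at $r_{+}$; the paper deliberately glues \emph{both} solutions at the common radius $r_0=\min\{r_{+},r_{+q}\}$ so that the two trajectories have the same initial point and explicitly comparable data, and you should do the same (or carry the dependence on the initial time through your continuous-dependence theorem). Second, the transversality of $\gamma$ at $R_0$ that you flag as the delicate point is not actually needed for the stated conclusion: $R_{0q}\leq R_0+\Delta R$ follows from the strict negativity of $\gamma$ at the single fixed offset point $R_0+\Delta R$ together with uniform closeness of $\gamma_q$ to $\gamma$, which is exactly the inequality \eqref{gamma_q - gamma final sing} the paper arranges; transversality would only be required for the stronger (unclaimed) statement that $R_{0q}\to R_0$. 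Dropping that hypothesis removes the one unproven assumption in your argument.
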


  \begin{proof}  

 Since

    \begin{equation}
        r_{-Q_0} \leq r_{-} \leq r_a \leq r_b \leq r_{+} \leq r_{+Q_0}
    \end{equation}
    
    we have

    \begin{equation}
        \max\{r_{-}, r_{-q}\} < \min\{r_{+}, r_{+q}\}
    \end{equation}
    
    So we can use the smaller radius of $r_{+}$ and $r_{+q}$ to perform the gluing process, because this will allow us to do the gluing at the same radius for the uncharged solution. For this purpose, we define

    \begin{equation} \label{r_0}
        r_0 := \min\{r_{+}, r_{+q} \}
    \end{equation}
    
    After integrating the Einstein equation \eqref{lambda diff} for $\lambda_q(r)$ from $r_0$ to $r$ and using the boundary condition

    \begin{equation}
        e^{-2\lambda_q(r_0)} = 1 - \frac{2M_0}{r_0} + \frac{Q_0^2}{r_0^2}
    \end{equation}
    
    we get
    
    \begin{align}
    \begin{split}
        e^{-2\lambda_q(r)} &= 1 -  \frac{8\pi}{r}\int_{r_0}^r s^2 g_q(s) ds - \frac{1}{r} \int_{r_0}^r \frac{q^2(s)}{s^2}ds + \frac{r_0}{r}\left( e^{-2\lambda_q(r_0)} -1 \right)\\
        &= 1 - \frac{2}{r} \left( \frac{r_0}{2} \left( 1- e^{-2 \lambda_q(r_0)} \right) + 4 \pi \int_{r_0}^r s^2 g_q(s) ds +\frac{1}{2} \int_{r_0}^r \frac{q^2}{s^2}ds \right) 
    \end{split}
    \end{align}
    
    Plugging this in to the second nontrivial Einstein equation \eqref{mu diff} for $\mu'(r)$ we end up with the reduced Einstein equation

    \begin{equation} \label{Einstein eq background}
        \mu_q'(r) = \frac{4 \pi r h_q(r) + \frac{r_0}{2r^2} \left( 1 - e^{-2 \lambda_q(r_0)}\right) + \frac{4\pi}{r^2} \int_{r_0}^r s^2 g_q(s)ds - \frac{q^2(r)}{2 r^3} + \frac{1}{2 r^2} \int_{r_0}^r \frac{q^2(s)}{s^2} ds }{1 - \frac{2}{r} \left( \frac{r_0}{2} \left( 1- e^{-2 \lambda_q(r_0)} \right) + 4 \pi \int_{r_0}^r s^2 g_q(s) ds +\frac{1}{2} \int_{r_0}^r \frac{q^2}{s^2}ds \right) } 
    \end{equation}

    for $r \geq r_0$ with boundary condition

    \begin{equation}
        e^{2 \mu_{0q}}:= e^{2 \mu_q(r_0)} = 1 - \frac{2 M_0}{r_0} + \frac{Q_0^2}{r_0^2}
    \end{equation}

    For the Maxwell equation \eqref{Maxwell} we impose the boundary condition $q(r_{BQ_0})=Q_0$. Since the right-hand side of the Einstein equation \eqref{Einstein eq background} is regular for all $r\geq r_0$, a solution exists locally. Furthermore, for the Einstein equation with regular center a continuation criterion has been shown in \cite{thaller}, which guarantees the existence of a solution at least as long as the denominator of the equation is positive. Since the equation in \eqref{Einstein eq background} only differs from the corresponding Einstein equation with regular center by some additional terms depending on $r_0$, $M_0$ and $Q_0$, which are however bounded and well behaved on any finite interval, the continuation criterion still applies. Now we proceed to show that the maximum radius of existence $R_C$ is bigger equal $R_0 + \Delta R$, with some finite and positive $\Delta R$, if we choose the absolute value of the charges $q_0$ and $Q_0$ small enough. We denote the denominator in the uncharged Einstein equation by

    \begin{equation}
        d_{M_0}(r) := 1- \frac{2}{r} \left( M_0 + 4 \pi \int_{r_0}^r s^2 g(s) ds \right)
    \end{equation}

    and in the charged case by
    
    \begin{equation}
        d_{M_0 Q_0}(r) := 1- \frac{2}{r} \left( M_0 + 4 \pi \int_{r_0}^r s^2 g_q(s) ds \right) + \frac{Q_0^2}{r_0 r} - \frac{1}{r} \int_{r_0}^r \frac{q^2(s)}{s^2}ds
    \end{equation}

    and define

    \begin{equation} \label{Delta d_0}
        \Delta d_0 := \alpha \cdot d_{M_0 Q_0}(r_0) = \alpha \left( 1 - \frac{2 M_0}{r_0} + \frac{Q_0^2}{r_0^2} \right)
    \end{equation}

  for a positive constant $\alpha < 1$. Furthermore, we define the radii

    \begin{equation}
        r^* := \inf \{ r \in (r_0, R_C) | d_{M_0 Q_0}(r) = \Delta d_0 \}
    \end{equation}

    and
    
    \begin{equation}
        \Tilde{r} := \inf \{ r \in (r_0, R_C) | \vert \gamma_q(r) - \gamma(r) \rvert > \lvert \gamma(R_0 + \Delta R) \rvert \}
    \end{equation}
    
    and their minimum

    \begin{equation}
        \Tilde{r}^* := \min\{r^*, \Tilde{r} \}
    \end{equation}

     and note that there exists a solution for all $r \in (r_0, \Tilde{r}^*)$, because the continuation criterion applies. So, we want to show that $\Tilde{r}^* \geq R_0 + \Delta R$. To do so we establish an upper bound on the distance between the denominators as well as on the distance between the two $\gamma$-functions. Both of them will follow from an estimate on $\lvert g_q -g \rvert + \lvert h_q -h \rvert$ which we will show in the following. We split the estimate up in two parts

     \begin{align} \label{split g_Lambda - g}
        \lvert g_q - g \rvert \leq \lvert \underbrace{g(r, \mu_q, \Tilde{I}_q) - g(r, \mu, \Tilde{I}_q) }_{=: \Delta_\mu g}\rvert + \lvert\underbrace{ g(r, \mu, \Tilde{I}_q) - g(r. \mu,0) }_{=:\Delta_{\Tilde{I}} g}\rvert
    \end{align}
     
    Similar to \cite{fajman} we obtain

    \begin{align}
    \begin{split}
        \lvert \Delta_{\Tilde{I}} g \rvert &\leq \underset{J\in \left[ I, I_\Lambda \right]}{\sup} \left\lvert \frac{\partial g(r, \mu, J)}{\partial J} \right\rvert \lvert I_\Lambda(r) - I(r) \rvert
    \end{split}
    \end{align}
    and
    \begin{align}
    \begin{split} \label{Delta_mu g}
        \lvert \Delta_\mu g \rvert &\leq \underset{u\in \left[ \mu, \mu_\Lambda \right]}{\sup}\left\lvert \frac{\partial g(r, u, I_\Lambda)}{\partial u} \right\rvert \lvert \mu_\Lambda - \mu \rvert\\
        &=: C(r) \lvert \mu_\Lambda - \mu \rvert
    \end{split}
    \end{align}
     
    and estimate the distance between $\mu_q(r)$ and $\mu(r)$ for $r_0 \leq r \leq \Tilde{r}^*$. For the uncharged solution we also glue the trivial Ansatz for the particle distribution function to a nontrivial one at $r_0$ and impose the conditions

     \begin{align}
         e^{-2 \lambda(r_0)} = 1 - \frac{2M_0}{r_0}
     \end{align}

     and 

     \begin{equation}
         e^{2\mu_0} := e^{2 \mu(r_0)} = 1 - \frac{2 M_0}{r_0}
     \end{equation}
     
     to match the solution to the non-shifted Schwarzschild metric at $r_0$. Firstly, we calculate the difference between the initial values of $\mu$ and $\mu_q$

    \begin{align}
    \begin{split}
        \lvert \mu_0 - \mu_{0q} \rvert &= \frac{1}{2} \ln \left( \left( 1 - \frac{2 M_0}{r_0} + \frac{Q_0^2}{r_0^2} \right) \left( 1 - \frac{2 M_0}{r_0} \right)^{-1} \right)\\
        &= \frac{1}{2} \ln \left( 1 + \frac{Q_0^2}{r_0^2} \left( 1 - \frac{2 M_0}{r_0} \right)^{-1} \right) =: C_{0 Q_0}
    \end{split}
    \end{align}

    where $C_{0 Q_0}$ is a constant growing in $Q_0^2$ or equivalently in $\lvert Q_0 \rvert$ and vanishing as $Q_0 \to 0$.  For computational convenience we set $g_q$, $g$, $h_q$ and $h$ equal to $0$ for $r \in (0, r_0)$. With this we estimate

    \begin{align}
    \begin{split} \label{mu - mu_Lambda background}
        \lvert \mu(r) - \mu_q(r) \rvert &\leq C_{0 Q_0} + \int_{r_0}^r \lvert \mu'(s) - \mu_q'(s) \rvert ds\\
        &\leq C_{0 Q_0} + \int_{r_0}^r \frac{1}{d_{M_0Q_0}(s)} \bigg( 4\pi s \lvert h_q(s) - h(s) \rvert\\
        &\hspace{0.4cm} + \frac{4 \pi}{s^2} \int_0^s \sigma^2 \left\lvert g_q(\sigma) - g(\sigma) \right\rvert d\sigma + \left\lvert \frac{Q_0^2}{2r_0s^2} \right\rvert\\
        & \left. \hspace{0.4cm} + \left\lvert - \frac{q^2}{2s^3} + \frac{1}{2s^2} \int_{r_0}^s \frac{q^2(\sigma)}{\sigma^2} d\sigma \right\rvert \right) ds\\
        &\hspace{0.4cm} + \int_{r_0}^r \left\lvert \frac{4 \pi}{d_{M_0 Q_0}(s)} - \frac{4 \pi}{d_{M_0}(s)} \right\rvert \left( s h(s) + \frac{1}{s^2} \int_0^s \sigma^2 g(\sigma) d\sigma + \frac{M_0}{s^2} \right) ds
    \end{split}
    \end{align}

    where we plugged in \eqref{Einstein eq background} for the charged an uncharged case respectively and then subtracted and added a term 
    
    $$\frac{1}{d_{M_0Q_0}(s)} \left(s h(s) + \frac{1}{s^2} \int_0^s \sigma^2 g(\sigma) d \sigma + \frac{M_0}{s^2} \right)$$
 
 inside of the absolute value under the integral. The terms with the differences between the $g$- and $h$-functions can be estimated to
    
    \begin{gather}
        \int_{r_0}^r \underbrace{\frac{1}{d_{M_0Q_0}(s)}}_{\leq \frac{1}{\Delta d_0}} \left( 4\pi s \lvert h_q(s) - h(s) \rvert + \frac{4 \pi}{s^2} \int_0^s \sigma^2 \left\lvert g_q(\sigma) - g(\sigma) \right\rvert d\sigma\right) ds\\
        \leq C(r) \int_{r_0}^r \left( \lvert h_q(s) - h(s) \rvert + \left\lvert g_q(s) - g(s) \right\rvert \right) ds
    \end{gather}
    
    with a positive function $C(r)$, growing in $r$ and independent of all charges. For the black hole charge term we get

    \begin{equation}
        \left\lvert \frac{Q_0^2}{2r_0 s^2} \right\rvert \leq \frac{Q_0^2}{2r_0^3} =:C_{Q_0}
    \end{equation}
    
    and for the other charge terms

    \begin{align}
    \begin{split} \label{q terms estimate}
        \left\lvert \int_0^r \frac{1}{2d_{M_0Q_0}(s)} \left( - \frac{q^2}{s^3} + \frac{1}{s^2} \int_0^s \frac{q^2(\sigma)}{\sigma^2} d \sigma \right) ds \right\rvert &\leq \frac{1}{2\Delta d_0} \left\lvert \int_0^r \frac{d}{ds} \left( \frac{1}{s} \int_0^s \frac{q^2(\sigma)}{\sigma^2}d\sigma \right) ds \right\rvert \\
        &= \frac{1}{2\Delta d_0} \frac{1}{r} \int_0^r \frac{q^2(s)}{s^2} ds\\
        &\leq (Q_0+q_0)^2 C(r)
    \end{split}    
    \end{align}
    
    with a positive, continuous function $C(r)$, growing in $r$ and again independent of $Q_0$ and $q_0$, where the last line follows because $d_{M_0Q_0}(r)>0$, so $\frac{1}{r} \int_0^r \frac{q^2(s)}{s^2} ds$ remains bounded. The terms in the brackets of the last line in estimate \eqref{mu - mu_Lambda background} can also be estimated by some $C(r)$ with the same properties as the other $C(r)$ we had before. This follows from $h(s), g(s) \leq C(r)$ for $s \leq r$ which has been shown in \cite{thaller} and from $\frac{M_0}{s^2} \leq \frac{M_0}{r_0^2}$. It remains to estimate the distance between the inverse values of the denominators. Since we have Vlasov matter with $p + 2p_T = \rho$, the generalized Buchdahl inequality with Schwarzschild singularity at the center holds for all $r \in [\frac{9M_0}{4}, \infty)$ (see \cite{fajman}, Lemma 5.1) which reads

    \begin{equation} \label{Buchdahl BH}
        \frac{2M_0 + 8 \pi \int_{r_0}^r s^2 \rho ds }{r} \leq \frac{8}{9}.
    \end{equation}
 The proof indeed generalizes to the EVMS.     
    Since

    \begin{align}
    \begin{split}
        r_0 &\geq \hat{r} = \frac{L_0}{2M_0} - \sqrt{\frac{L_0^2}{4M_0^2}-3L_0} \geq \frac{9M_0}{4}
    \end{split}
    \end{align}
    
   the generalized Buchdahl inequality \eqref{Buchdahl BH} holds in the EVS with central black hole on all $r \in [r_0, \infty)$. Using this result we get for the denominator of the uncharged system $d_{M_0}(r) \geq \frac{1}{9}$. Thus, we can estimate

    \begin{equation}
        \left\lvert \frac{1}{d_{M_0Q_0}} - \frac{1}{d_{M_0}} \right\rvert = \frac{\left\lvert d_{M_0} -d_{M_0Q_0} \right\rvert}{d_{M_0Q_0} d_{M_0}} \leq \frac{9 \left\lvert d_{M_0} - d_{M_0Q_0} \right\rvert}{\Delta d_0 }
    \end{equation}

   and calculate

    \begin{align}
    \begin{split}
        \lvert d_{M_0} (r) - d_{M_0 Q_0}(r) \rvert & \leq \frac{8 \pi}{r} \int_0^r s^2 \lvert g_q(s) - g(s) \rvert ds + \frac{1}{r} \int_{r_0}^r \left\lvert \frac{q^2(s)}{s^2} \right\rvert ds + \frac{Q_0^2}{r_0 r} \\
        &\leq 8 \pi r \int_0^r \lvert g_q(s) - g(s) \rvert ds + (q_0^2 + q_0 Q_0) C(r) + C_{Q_0}.
    \end{split} \label{d_M - d_MQ difference}
    \end{align}
    
    with $C(r)$ with the same properties as the other $C(r)$ we had before and a positive constant $C_{Q_0}$ that is independent of $r$ and $q_0$, growing in $\lvert Q_0 \rvert$ and converging to zero as $Q_0$ does. Collecting all the estimates, we obtain

    \begin{align}
    \begin{split} \label{mu - mu_q background}
        \lvert \mu(r) - \mu_q(r) \rvert &\leq r C_{Q_0} + (q_0^2 + q_0Q_0) C(r) + C(r) \int_0^r \left( \lvert g_q -g \rvert + \lvert h_q -h \rvert \right) ds.
    \end{split}
    \end{align}

    To get an upper bound for $\lvert g_q - g \rvert$ we are left with estimating $\lvert \Delta_{\Tilde{I}} g \rvert$. We find
    
    \begin{align}
    \begin{split} \label{I tilde prop}
        \Tilde{I}_q(r) &= q_0 \int_{r_v}^r \frac{q(s)}{s^2} e^{\mu(s) + \lambda(s)} ds\\
        &\propto q_0 Q_0 C(r) + q_0^2 C(r)
    \end{split}
    \end{align}
    
    which is positive because of the same sign of $Q_0$ and $q_0$ with two positive functions $C(r)$ that are independent of $q_0$ and $Q_0$. Thus, we can estimate

    \begin{align}
    \begin{split}
        \lvert \Delta_{\Tilde{I}} g \rvert &\leq (q_0Q_0 +q_0^2) C(r)
     \end{split}
    \end{align}
    
    and it results for the difference between the $g$- and $h$- functions that

    \begin{equation} \label{g h dist background 1}
        \lvert g_q -g \rvert + \lvert h_q -h \rvert \leq \left(q_0 Q_0 + q_0^2 \right) C(r) + rC_{Q_0}(r) + C(r) \int_0^r \left( \lvert g_q - g \rvert + \lvert h_q- h \rvert \right) ds
    \end{equation}

    where again the functions $C(r)$ can be different from the ones before, but they keep the same properties. A Gr\"onwall argument then implies

    \begin{equation} \label{g-g_q + h-h_q}
        \lvert g_q -g \rvert + \lvert h_q -h \rvert \leq (q_0 Q_0 + q_0^2) C(r) + r C_{Q_0}.
    \end{equation}

    Plugging this estimate in \eqref{mu - mu_q background} gives

    \begin{equation}
         \lvert \mu(r) - \mu_q(r) \rvert \leq (q_0 Q_0 + q_0^2) C(r) + C_{Q_0}(r)
    \end{equation}

    This also means for the distance between the $\gamma$-functions
    
    \begin{equation} \label{gamma dist back 1}
        \lvert \gamma_q(r) - \gamma(r) \rvert \leq (q_0 Q_0 + q_0^2) C(r) + C_{Q_0}(r)
    \end{equation}
    
    Thus, choosing the absolute value of the charges $q_0$ and $Q_0$ small enough we can assure

    \begin{equation} \label{gamma_q - gamma final sing}
         \lvert \gamma_q(r) - \gamma(r) \rvert \leq \lvert \gamma (R_0 + \Delta R) \rvert
    \end{equation}

    for all $r \in (r_0, R_0+ \Delta R)$. Looking at the distance \eqref{d_M - d_MQ difference} of the denominators and keeping in mind estimate \eqref{g-g_q + h-h_q} for the difference between the $g$-functions, we obtain

    \begin{equation} \label{d dist back 1}
        \lvert d_{M_0} (r) - d_{M_0 Q_0}(r) \rvert \leq (q_0 Q_0 + q_0^2) C(r) + C_{Q_0}(r)
    \end{equation}
    
    and see that by choosing $\lvert q_0 \rvert$ and $\lvert Q_0 \rvert$ small enough, we can reach on all $r \in (r_0, R_0+ \Delta R)$

    \begin{equation}
        \lvert d_{M_0} (r) - d_{M_0 Q_0}(r) \rvert \leq \frac{1}{9} - \Delta d_0
    \end{equation}

    where we choose the constant $\alpha$ in \eqref{Delta d_0} small enough, such that $\frac{1}{9} - \Delta d_0 \geq C >0 $. Due to the Buchdahl inequality we have $d_{M_0} \geq \frac{1}{9}$ and consequently we get

    \begin{equation}
        \Delta d_0 \leq d_{M_0 Q_0}(r)
    \end{equation}

    for all $r_0 \leq r \leq R_0 + \Delta R$ which, together with \eqref{gamma_q - gamma final sing} means that

    \begin{equation}
        \Tilde{r}^* \geq R_0 + \Delta R.
    \end{equation}

    This implies that a solution $(\mu_q, \lambda_q, q)$ exists at least until $R_0 + \Delta R$ and that the matter quantities are of bounded support with boundary $R_{0 q} \leq R_0 + \Delta R$. 
  \end{proof}
  
    \begin{proof} [Proof of the main theorem]
    Beyond the support there is again a vacuum region, where we can glue the nontrivial Ansatz \eqref{Ansatz} (with the modified energy) for the particle contribution function continuously to the constant zero function and thus continue the solution by a correctly shifted Reissner-Nordstr\"om solution with total mass parameter $M$ which corresponds to the sum of the black hole mass $M_0$ and the mass $4 \pi \int_{r_0}^{R_{0q}} s^2 g_q(s)ds$ of the matter shell and with total charge parameter $Q$ which corresponds to the sum of the black hole charge $Q_0$ and the total charge $4 \pi \int_{r_0}^{R_{0q}} s^2 \rho_q(s) ds$ of the matter region.
\end{proof}

\section{Discussion}
The static solutions constructed in the present work generalize similar static solutions to the EVMS in previous works to the case of a black hole in the center. We expect that it is possible to generalize this class as well as the regular solutions constructed in \cite{thaller} to the case of a positive cosmological constant. This will be addressed in a future paper. Similarly, we intend to consider multiple-shell solutions consisting of particles of potentially different charges.

\section{Appendix}

We collect calculations that have not been documented in detail in the main body of the paper.

\subsection{Proof that $\Tilde{D} > 0$} \label{A1}

To show that the equation $a_{Q_0}(r)=1$ has three distinct, real solutions we need to show that $\Tilde{D}$, defined in \eqref{discriminant} is strictly positive. For this aim, at first we expand the brackets and find

\begin{align}
\begin{split}
    \Tilde{D} &= 4M_0^2L_0 \left(-2Q_0^4 + 5Q_0^2L_0-4M_0^2L_0 \right) +M_0^2L_0^3 - Q_0^2 \left(Q_0^2+L_0 \right)^3\\
    &= -8 L_0 M_0^2 Q_0^4 -16 L_0^2 M_0^4 - Q_0^8 - L_0^3 Q_0^2 -3 L_0 Q_0^6 - 3 L_0^2 Q_0^4 + 20 L_0^2 M_0^2 Q_0^2 + L_0^3 M_0^2
\end{split}
\end{align}

then we split up the second last term to remove step by step the negative terms. We write

\begin{equation}
    20 L_0^2 M_0^2 Q_0^2 = 3 L_0^2 M_0^2 Q_0^2 + 1 L_0^2 M_0^2 Q_0^2 + 16 L_0^2 M_0^2 Q_0^2
\end{equation}

and do the following estimates

\begin{align}
\begin{split}
    L_0^2 M_0^2 Q_0^2 &> 16 L_0 M_0^4 Q_0^2 > 16 L_0 M_0^2 Q_0^4 > 8 L_0 M_0^2 Q_0^4 + 8 L_0 Q_0^6\\
    &> 8 L_0 M_0^2 Q_04 + 3 L_0 Q_0^6  + 5 \cdot 16 M_0^2 Q_0^6 > 8 L_0 M_0^2 Q_04 + 3 L_0 Q_0^6  + 5 \cdot 16 Q_0^8\\ &> 8 L_0 M_0^2 Q_04 + 3 L_0 Q_0^6  + Q_0^8
\end{split}
\end{align}

and

\begin{equation}
     3 L_0^2 M_0^2 Q_0^2 > 3 L_0^2 Q_0^4
\end{equation}

This means for $\Tilde{D}$ that

\begin{align}
\begin{split}
    \Tilde{D} &> - 16 L_0^2 M_0^4 - L_0^3 Q_0^2 + 16 L_0^2 M_0^2 Q_0^2 + L_0^3 M_0^2\\
    &= -16L_0^2M_0^2 \left( M_0^2 - Q_0^2 \right) + L_0^3 \left( M_0^2 - Q_0^2 \right)\\
    &= L_0^2 \underbrace{\left( M_0^2 - Q_0^2 \right)}_{>0} \underbrace{\left( L_0 - 16 M_0^2 \right)}_{>0} > 0
\end{split}
\end{align}

and the proof is complete.

\subsection{Proof that $r_{-Q_0} > M_0 - \sqrt{M_0^2-Q_0^2}$} \label{A2}

To prove that $r_{-Q_0} > M_0 - \sqrt{M_0^2-Q_0^2}$, it is sufficient to proof the inequality in \eqref{rhs > r0}, because we already realized in \eqref{for apx A2} that $r_{-Q_0} \geq \frac{Q_0^2 + L_0 - \sqrt{\left(Q_0^2+L_0\right)^2 - 12L_0M_0^2}}{6M_0}$. To proof that this latter expression is bigger than $M_0-\sqrt{M_0^2-Q_0^2}$, we firstly realize that

\begin{align} \label{final ineq r-Q0 r0}
    12M_0^2 \left(3M_0^2-Q_0^2 \right) > 0
\end{align}

and that

\begin{equation}
    12M_0^2 \left(3M_0^2-Q_0^2 \right) = 12L_0 M_0^2 - \left(L_0 + Q_0^2 \right)^2 + \left(L_0 + Q_0^2 \right)^2 -12 M_0^2 \left(L_0 + Q_0^2 \right) + 36M_0^4
\end{equation}

plugging this equality into inequality \eqref{final ineq r-Q0 r0} and bringing the first two terms to the left-hand side, we obtain

\begin{align} \label{ineq 1}
    \left(L_0 + Q_0^2 \right)^2 -12 M_0^2 \left(L_0 + Q_0^2 \right) + 36M_0^4 > \left(L_0 + Q_0^2 \right)^2 - 12L_0 M_0^2
\end{align}

Since

\begin{equation}
    \left(L_0 + Q_0^2 \right)^2 - 12 L_0 M_0^2 > L_0^2 - 12L_0 M_0^2 > 16 L_0 M_0^2 - 12L_0 M_0^2 > 0
\end{equation}

we can take the square root of inequality \eqref{ineq 1}. For this purpose we note that

\begin{equation}
    \left(L_0 + Q_0^2 \right)^2 -12 M_0^2 \left(L_0 + Q_0^2 \right) + 36M_0^4 = \left( L_0 + Q_0^2 - 6 M_0^2 \right)^2
\end{equation}

and find

\begin{align}
\begin{split} \label{ineq 2}
     L_0 + Q_0^2 - 6M_0^2 &> \sqrt{\left(L_0 + Q_0^2 \right)^2 - 12 L_0 M_0^2}\\
     &> \sqrt{\left(L_0 + Q_0^2 \right)^2 - 12 L_0 M_0^2} - 6M_0 \sqrt{M_0^2 - Q_0^2}
\end{split}
\end{align}

We rearrange this to

\begin{equation} \label{ineq 3}
    L_0+Q_0^2 - \sqrt{\left(L_0 + Q_0^2 \right)^2 - 12 L_0 M_0^2} > 6 M_0^2 - 6M_0 \sqrt{M_0^2 - Q_0^2} 
\end{equation}

and divide as a last step by $6M_0$ which results in

\begin{equation}
    \frac{L_0+Q_0^2 - \sqrt{\left(L_0 + Q_0^2 \right)^2 - 12 L_0 M_0^2}}{6M_0} > M_0 - \sqrt{M_0^2 - Q_0^2}
\end{equation}

and proves \eqref{rhs > r0} and thus $r_{-Q_0} > M_0 - \sqrt{M_0^2-Q_0^2}$.

\subsection{Proof that $r_{+Q_0} \geq r_{-}$} \label{A3}

In order to show that $r_{+Q_0} \geq r_{-}$, we estimate

\begin{align}
\begin{split} \label{r+Q0 - r-}
    r_{+Q_0} - r_{-} & \geq \frac{Q_0^2 + L_0}{6M_0} + \frac{1}{2} \sqrt{\frac{4}{3} \left( \frac{\left( Q_0^2 + L_0 \right)^2}{12 M_0^2}-L_0 \right)} - \frac{L_0}{4M_0} + \sqrt{\frac{L_0^2}{16M_0^2} - L_0}\\
    & \geq \frac{L_0}{6M_0} + \frac{1}{2} \sqrt{\frac{L_0^2}{9 M_0^2}- \frac{4L_0}{3}} - \frac{L_0}{4M_0} + \sqrt{\frac{L_0^2}{16M_0^2} - L_0}\\
    & \geq \frac{1}{2} \sqrt{\frac{L_0^2}{9 M_0^2} - \frac{4L_0}{3}} - \frac{L_0}{12 M_0}
\end{split}
\end{align}

Since

\begin{align}
\begin{split}
    \frac{1}{4} \left( \frac{L_0^2}{9M_0^2} - \frac{4L_0}{3} \right) - \frac{L_0^2}{144M_0^2} &= \frac{L_0^2}{36M_0^2} - \frac{L_0^2}{144M_0^2} - \frac{L_0}{3} = \frac{3L_0^2}{144M_0^2} - \frac{L_0}{3}\\
    &> \frac{48M_0^2L_0}{144M_0^2} - \frac{L_0}{3} = \frac{L_0}{3 } - \frac{L_0}{3} = 0
\end{split}
\end{align}

we have

\begin{equation}
    \frac{1}{4} \left( \frac{L_0^2}{9M_0^2} - \frac{4L_0}{3} \right) > \frac{L_0^2}{144M_0^2}
\end{equation}

Due to positivity we can take the square root which leads to

\begin{equation}
    \frac{1}{2} \sqrt{\frac{L_0^2}{9M_0^2} - \frac{4L_0}{3}} > \frac{L_0}{12M_0}
\end{equation}

Using this in inequality \eqref{r+Q0 - r-} yields

\begin{equation}
    r_{+Q_0} - r_{-} > 0
\end{equation}

and we are done.

\bibliography{references}
\bibliographystyle{plainnat}
\end{document}